\newtheorem{theorem}{Theorem}[section]
\newtheorem{assumption}{Assumption}
\title{\LARGE \bf Critic-Only Integral Reinforcement Learning Driven by Variable Gain Gradient Descent for Optimal Tracking Control}
\author{Amardeep Mishra and Satadal Ghosh
\thanks{Amardeep Mishra is a Research Scholar at Department of Aerospace Engineering, IIT Madras, Chennai-600036, India
        {\tt\small ae15d405@smail.iitm.ac.in}}%
\thanks{Satadal Ghosh is Faculty at Department of Aerospace Engineering, IIT Madras, Chennai-600036, India 
        {\tt\small satadal@iitm.ac.in}}%
}
\begin{document}

\maketitle
\thispagestyle{empty}
\pagestyle{empty}


\begin{abstract}  :              
Integral reinforcement learning (IRL) was proposed in literature to obviate the requirement of drift dynamics in adaptive dynamic programming framework. 
Most of the online IRL schemes in literature require two sets of neural network (NNs), known as actor-critic NN and an initial stabilizing controller.
Recently, for RL-based robust tracking this requirement of initial stabilizing controller and dual-approximator structure could be obviated by using a modified gradient descent-based update law containing a stabilizing term with critic-only structure.
To the best of the authors' knowledge, there has been no study on leveraging such stabilizing term in IRL algorithm framework to solve optimal trajectory tracking problems for continuous time nonlinear systems with actuator constraints. To this end a novel update law leveraging the stabilizing term along with variable gain gradient descent in IRL framework is presented in this paper. With these modifications, the IRL tracking controller can be implemented using only critic NN, while no initial stabilizing controller is required. Another salient feature of the presented update law is its variable learning rate, which scales the pace of learning based on instantaneous Hamilton-Jacobi-Bellman error and rate of variation of Lyapunov function along the system trajectories. The augmented system states and NN weight errors are shown to possess uniform ultimate boundedness (UUB) stability under the presented update law and achieve a tighter residual set.
This update law is validated on a full 6-DoF nonlinear model of UAV for attitude control. 
\end{abstract}



\section{Introduction}
In order to find optimal control policies for a generic continuous time dynamic system, solving Hamilton-Jacobi-Bellman (HJB) equation is essential. Neural networks (NNs) have been utilized to provide approximate solution to  (HJB) equation in adaptive dynamic programming (ADP) framework \cite{abu2005nearly}, \cite{vamvoudakis2010online}, \cite{liu2013adaptive}.
ADP algorithm for optimal tracking control problem (OTCP) for continuous time nonlinear system (CTNS) was initially proposed by devising transient and steady-state controllers \cite{zhang2011data}. 
However, the major limitation of this method was that it required the control gain matrix to be invertible in order to implement the steady state controller. 
This limitation was bypassed in \cite{modares2014optimal} by proposing an augmented system comprising of error and desired dynamics. 
 
The schemes discussed above require an initial stabilizing control for policy iteration and dual-approximator - actor-critic NN - to implement reinforcement learning (RL). Finding an initial stabilizing controller is not a trivial task. To address these, a single-approximator (critic-only NN) structure, in which a stabilizing term was included in the critic update law to relax the criterion of initial stabilizing control for ADP algorithm, was proposed for optimal regulation \cite{liu2015reinforcement} and tracking in \cite{dierks2010optimal} and robust tracking in \cite{yang2015robust}.

Most of the schemes discussed above except \cite{liu2013adaptive} and \cite{modares2014optimal} require full knowledge of plant dynamics, which may not be available in many real-life applications. In recent times Integral RL (IRL) schemes have been proposed in literature as an alternate formulation of Bellman equation to obviate the requirement of drift dynamics. Since exploration phase of off-policy IRL formulations make them unsuitable for various engineering applications such as control of aerial vehicles and  the final control policies learnt at the end of this exploration phase  cannot effectively cope with a different operational set-up, online and on-policy IRL algorithm is considered in this paper. 

Synchronous tuning of actor-critic NN based on a gradient descent-driven update law in on-policy IRL framework was considered for optimal regulation \cite{vamvoudakis2014online} and optimal tracking \cite{modares2014optimal}.
However, in both of these papers, it was required to have an initial stabilizing controller to initiate the process of policy iteration, which is non-trivial. Also, the dual-approximation structure they used increased the computational load. On the other hand, \cite{liu2015reinforcement} and \cite{yang2015robust} did not require an initial stabilizing controller for their policy iteration algorithm. However, they did not utilize IRL framework and hence needed full knowledge of the nominal plant dynamics.

Inspired by \cite{modares2014optimal}, \cite{liu2015reinforcement} and \cite{yang2015robust}, this paper addresses those concerns by proposing a novel update law in IRL framework to solve OTCP for CTNS with actuator constraints by utilizing a variable gain gradient descent \cite{mishra2019variable} that requires neither an initial stabilizing controller nor drift dynamics. 
Moreover, unlike \cite{modares2014optimal} and  \cite{vamvoudakis2014online}, 
this paper uses critic-only NN for both policy evaluation and policy iteration thereby reducing the computational load. The primary contributions of this paper are as follow.
\begin{itemize}
    \item Development of a novel update law for single NN (critic-only) in IRL framework to solve OTCP for CTNS with actuator constraints that does not require an initial stabilizing controller.
    \item The presented critic update law is driven by variable gain gradient descent that can adjust the learning rate depending on the instantaneous HJB error and the instantaneous rate of variation of Lyapunov function.
    \item The added benefit of the proposed variable gain gradient descent is that it shrinks the size of the residual set on which the augmented system trajectories converge to.
\end{itemize}
The rest of the paper is organized as follows, Section \ref{prelim} introduces OTCP for CTNS with actuator constraints and approximation of value function using a single NN. Section \ref{vargain} presents the prime contribution of this paper, i.e., the variable gain gradient descent-based online parameter update law in IRL framework for OTCP and Section \ref{res} validates the performance of the presented update law on a full 6-degrees-of-freedeom (6-DoF) nonlinear model of UAV, and concluding remarks are finally presented in Section \ref{conclusion}.

\section{Optimal tracking control problem and value function approximation}\label{prelim}
\subsection{Preliminaries}

Let the control-affine system dynamics be defined by:
\begin{equation}
\dot{x}=f(x)+g(x)u
\label{eq:affine}
\end{equation}
where, $x \in \mathbb{R}^n$ is the state of the system, $u \in \mathbb{R}^m$ is the control vector, 
$f(x):~ \mathbb{R}^n \to \mathbb{R}^n$ is the drift dynamics and $g(x):~ \mathbb{R}^{n} \to \mathbb{R}^{n\times m}$ is the input coupling dynamics. 
\begin{assumption}\label{uncert}
\textnormal{The system dynamics \eqref{eq:affine} is Lipschitz continuous in $x$ on compact set $\Omega \subseteq \mathbb{R}^n$. 
Further, $\exists L_f>0, g_M>0 \ni \|f(x)\| \leq L_f\|x\|,~0<\|g(x)\|<g_M,~ \forall x \in \mathbb{R}^n$. }
\end{assumption}
\begin{assumption}\label{ref_dyn}
\textnormal{Let $x_d(t)$ be the desired bounded reference trajectory governed by $\dot{x}_d(t)=H(x_d(t)) \in \mathbb{R}^n$ and $H(0)=0$, where $H(x_d(t))$ is Lipschitz continuous in $x_d$.}
\end{assumption}
An augmented system is formulated in this section, and a discounted cost function with actuator constraints for this augmented system is considered. 
Tracking error dynamics is defined as, $\dot{e}=\dot{x}-\dot{x}_d=f(x_d+e)+g(x_d+e)u(t)-H(x_d(t))$.
Therefore, the dynamics of augmented system states $z=[e^T,x_d^T]^T$ can be compactly written as:
\begin{equation}
\dot{z}=F(z)+G(z)u
\label{aug}
\end{equation}
where, $u \in \mathbb{R}^m$, $F:\mathbb{R}^{2n} \to \mathbb{R}^{2n}$ and $G: \mathbb{R}^{2n} \to \mathbb{R}^{2n\times m}$ are given by:
\begin{equation}
\begin{split}
F(z)=\begin{pmatrix}
f(e+x_d)-H(x_d) \\
H(x_d)
\end{pmatrix},G(z)=
\begin{pmatrix}
g(e+x_d) \\
0
\end{pmatrix}
\end{split}
\end{equation}
From Assumptions \ref{uncert} and \ref{ref_dyn}, it can be observed that, $\|F(z)\| \leq L_F\|z\|$ and $\|G(z)\| \leq g_M$, where $L_F \geq 0$.

The infinite horizon discounted cost function for (\ref{aug}) is considered as follows,
\begin{equation}
V(z)=\int_t^{\infty} e^{-\gamma(\tau-t)}[Q(z)+U(u)]d\tau    
\label{cost}
\end{equation}
where, $Q(z)=z^TQ_1z$ and $Q_1 \in \mathbb{R}^{2n\times 2n}$ is a positive definite matrix. 
The term $U(u)$ above is for control constraints. Following \cite{abu2005nearly} and  \cite{lyashevskiy1996constrained}, 
in this paper, 
$U(u)$ is defined as,
\begin{equation}
\footnotesize
\begin{split}
U(u)=2u_m\int_0^u (\psi^{-1}(\nu/u_m))^TRd\nu = 2u_m\sum_{i=1}^{m}\int_0^{u_i} (\psi^{-1}(\nu_i/u_m))^TR_i d\nu_i
\end{split}
\end{equation}
where, $R \in \mathbb{R}^{m\times m}$ is a positive definite diagonal matrix, ($\psi: \mathbb{R}^m \to \mathbb{R}^m$) is a function possessing following properties -(i) It is odd and monotonically increasing. (ii) It is a smooth bounded function possessing, ($|\psi(.)| \leq 1$). 
In literature $\psi$ has been considered as $\tanh,erf,sigmoid$ functions. 
In this paper, $\psi=\tanh{(.)}$.
This also ensures $U(u)$ to remain positive. The discount factor, $\gamma \geq 0$, defines the value of utility in future. 
Differentiating (\ref{cost}) along the system trajectories and rearranging the terms,
\begin{equation}
\small
\begin{split}
 \nabla_z{V}(F(z)+G(z)u)-\gamma V(z)+z^TQ_1z+U(u)
 =\mathcal{H}(z,u,\nabla_z{V})=0   
\end{split}
\label{Ha}
\end{equation}
\vspace{-.1cm}
where, $\mathcal{H}(.)$ represents the Hamiltonian.
Let $V^*(z)$ be the optimal cost function satisfying $\mathcal{H}(.)=0$ and is given by,
\begin{equation}
V^*(z)=\min_{u} \int_t^{\infty}e^{-\gamma(\tau-t)}[z^TQ_1z+U(u)]d\tau
\label{opt_cost}
\end{equation}
Thus, $\mathcal{H}(.)=0$ can be re-written in terms of $V^*(z)$ as,
\begin{equation}
 \nabla_z{V^*}(F(z)+G(z)u)-\gamma V^*(z)+z^TQ_1z+U(u)=0
\label{hjb1} 
 \end{equation}
Differentiating (\ref{hjb1}) with respect to $u$, i.e, $\partial \mathcal{H}/ \partial u=0$, closed-form of optimal control action $u^*$ is obtained as,
\begin{equation}
u^*=-u_m\tanh{\Big(R^{-1}G(z)^T\nabla_z{V^*}/2u_m\Big)}
\label{opt_u}
\end{equation}
From (\ref{hjb1}) and (\ref{opt_u}), the HJB equation is rewritten as,
\begin{equation}
\begin{split}
V^*_zF(z)-2u_m^2A^T(z)R\tanh(A(z))+z^TQ_1z+\\
2u_m\int_0^{u^*}\tanh^{-1}(\nu/u_m)^TRd\nu-\gamma V^*=0
\end{split}
\label{hjb_int}
\end{equation}
where, $V_z^*\triangleq\nabla_z{V^*}$, and $A\triangleq(1/2u_m)R^{-1}G(z)^TV^*_z \in \mathbb{R}^m$.
$U(u)$ can be simplified as:
\begin{equation}
\begin{split}
&U(u)=2u_m\int_0^{-u_m\tanh{A(z)}}\tanh^{-1}(\nu/u_m)^TRd\nu\\
&=2u_m^2A^T(z)R\tanh{A(z)}+
u_m^2\sum_{i=1}^{m}R_i\log[1-\tanh^2{A_i(z)}]
\end{split}
\label{hjb_int1}
\end{equation}
\subsection{Approximation of value function}
In this subsection, a single NN structure is utilized to approximate the value function. 
Leveraging Weierstrass approximation theorem \cite{abu2005nearly}, any smooth nonlinear mapping can be approximated by selecting a NN with sufficient number of nodes in the hidden layer.
Following this, an ideal NN weight vector $W\in \mathbb{R}^{N_1}$ that can approximate value function is considered,
\begin{equation}
V^*(z)=W^T\vartheta+\epsilon(z)
\label{appro}
\end{equation}
where, $\vartheta: \mathbb{R}^{2n} \to \mathbb{R}^{N_1}$ ($N_1$ being the number of nodes in hidden layer) is the regressor vector for critic NN. Then, gradient of $V^*(z)$ can be expressed as:
\begin{equation}
V^*_{z}=\nabla\vartheta^T W+\nabla\epsilon(z)
\label{vstar}
\end{equation}
\vspace{-.1cm}
Using (\ref{vstar}) in (\ref{opt_u}),
\begin{equation}
\begin{split}
u^*=-u_m\tanh{\Big(\frac{1}{2u_m}R^{-1}G(z)^T \nabla{\vartheta}^TW+\epsilon_{uu}\Big)}
\end{split}
\label{ustar}
\end{equation}
where, $\small \epsilon_{uu}=(1/2u_m)R^{-1}G^T(z)\nabla{\varepsilon}(z)=\small [\varepsilon_{{uu}_{11}},\varepsilon_{{uu}_{12}},...,\varepsilon_{{uu}_{1m}}]^T \in \mathbb{R}^m$.
Using mean value theorem, it can be re-written as $u^*=-u_m\tanh{(\tau_1(z))}+\epsilon_{u^*}    $

where, $\tau_{1}(z)=(1/2u_m)R^{-1}\hat{G}(z)^T \nabla{\vartheta}^TW=[\tau_{11},...,\tau_{1m}]^T \in \mathbb{R}^m$ and 
$\epsilon_{u^*}=-(1/2)((I_m-diag(\tanh^2{(\xi)}))\hat{G}^T(z)\nabla\epsilon)$ with $\xi \in \mathbb{R}^m$ and $\xi_i \in \mathbb{R}$ considered between $\tau_{1i}$ and $\epsilon_{uui}$ i.e., $i^{th}$ element of $\epsilon_{uu}$.
In the subsequent analaysis, $\tau_1 \triangleq \tau_1(z)$.
\section{Variable gain gradient descent-based Integral Reinforcement Learning}\label{vargain}
\subsection{Integral reinforcement learning algorithm}
IRL can be considered as an alternate formulation of Bellman equation that does not require the drift dynamics.
It is obtained by integrating the infinitesimal version of (\ref{cost}) over time interval $[t-T,t]$ (where, $T$ is fixed time interval also known as reinforcement interval),
\begin{equation}
V(z_{t-T})=\int_{t-T}^te^{-\gamma(\tau-t+T)}[Q(z)+U(u)]d\tau+e^{-\gamma T}V(z_{t})
\label{irl1}
\end{equation}
In order to preserve the equivalence between (\ref{Ha}) and (\ref{irl1}), $T$ must be selected as small as possible \cite{modares2014optimal} and \cite{vrabie2009adaptive}.
Eq. (\ref{irl1}) is also known as IRL form of Bellman equation. Now, using the approximation from (\ref{appro}) in (\ref{irl1}), the approximation error can be expressed as:
\begin{equation}
\footnotesize
\begin{split}
\int_{t-T}^te^{-\gamma(\tau-t+T)}[Q(z)+U_1(u)]d\tau+&e^{-\gamma T}W^{T}\vartheta(z_t)
-W^{T}\vartheta(z_{t-T})\equiv \varepsilon_B
\end{split}
\label{ehjb}
\end{equation}
where, $U_1(u)$ is obtained by substituting (\ref{vstar}) in (\ref{hjb_int1}) i.e., $U_1(u)=-W^T\nabla\vartheta Gu+
u_m^2\sum_{i=1}^{m}R_i\log[1-\tanh^2{(\tau_{1i}(z)+\varepsilon_{u_i^*})}]$.
Now from \cite{modares2014optimal}, $\int_{t-T}^te^{-\gamma(\tau-t+T)}\dot{\vartheta}$ can be written as,
\begin{equation}
\footnotesize
\begin{split}
\int_{t-T}^te^{-\gamma(\tau-t+T)}\dot{\vartheta}=\int_{t-T}^te^{-\gamma(\tau-t+T)}\nabla{\vartheta}(F+Gu)=\Delta\vartheta+\gamma\int_{t-T}^te^{-\gamma(\tau-t+T)}\vartheta
\end{split}
\end{equation}

Then, from (\ref{ehjb}), $\Delta\vartheta(z_t)\triangleq e^{-\gamma T}\vartheta(z_t)-\vartheta(z_{t-T})$, where $\vartheta$ is the regressor vector for critic NN, can also be written as,
\begin{equation}
\begin{split}
\Delta\vartheta(z_t)=\int_{t-T}^te^{-\gamma(\tau-t+T)}[\nabla{\vartheta}(F+Gu)-\gamma\vartheta]d\tau
\end{split}
\label{delvarth}
\end{equation}
Now, substituting (\ref{delvarth}) and the expression for $U_1(u)$ obtained above, in (\ref{ehjb}) and upon simplification HJB approximation error can be re-written as,
\begin{equation}
\scriptsize
\begin{split}
\int_{t-T}^te^{-\gamma(\tau-t+T)}[z^TQ_1z-\gamma W^T\vartheta+W^T\nabla{\vartheta}F+u_m^2\sum_{i=1}^mR_i\log(1-\tanh^2{(\tau_{1i})})]d\tau=\varepsilon_{HJB}
\end{split}
\label{tracking_bell}
\end{equation}

Since ideal critic NN weights are not known, their estimates will be used instead. This results in approximate value as $\hat{V}(z)=\hat{W}^T\vartheta(z)$, where $\hat{W}$ is estimated weight. Thus, approximate optimal control $(\hat{u})$ and HJB error $(\hat{e})$ are then obtained as,
\begin{subequations}
\begin{equation}
\hat{u}=-u_m\tanh{\Big(\frac{1}{2u_m}R^{-1}G^T(z)\nabla{\vartheta}^T\hat{W}}\Big)
\label{approx_u}
\end{equation}
\vspace{-.1cm}
\begin{equation}
\small
\begin{split}
 \hat{e}&=\int_{t-T}^te^{-\gamma(\tau-t+T)}[Q(z)+\hat{U}(\hat{u})]d\tau+e^{-\gamma T}\hat{W}^{T}\vartheta(z_t)-\hat{W}^{T}\vartheta(z_{t-T})
\end{split}
\label{hjb_error}
\end{equation}
\end{subequations}
where, $\hat{U}\triangleq\hat{U}(\hat{u})$ is the estimated version of $U(u)$ obtained by substituting $\hat{V}_z=\nabla\vartheta^T\hat{W}$ in $A(z)$ in (\ref{hjb_int1}) and given as,
\begin{equation}
\small
\begin{split}
\hat{U}=2u_m^2\tau_{2}^T(z)R\tanh{\tau_2(z)}+
u_m^2\sum_{i=1}^{m}R_i\log[1-\tanh^2{\tau_{2i}(z)}]
\end{split}
\label{Uhat}
\end{equation}
 where, $\tau_2=(1/2u_m)R^{-1}G(z)^T \nabla{\vartheta}^T\hat{W} \in \mathbb{R}^m$.
Now using (\ref{delvarth}) and subtracting (\ref{tracking_bell}) from (\ref{hjb_error}), the HJB error can be expressed in terms of $\tilde{W}$ as obtained in \cite{modares2014optimal}, 
\begin{equation}
\begin{split}
\hat{e}=-\Delta\vartheta^T\tilde{W}+\int_{t-T}^te^{-\gamma(\tau-t+T)}\tilde{W}^TMd\tau+E
\end{split}
\end{equation}
\begin{equation}
\small
\begin{split}
M&=\nabla{\vartheta}G(z)u_m\mathcal{F}(z),~\mathcal{F}(z)=\tanh{(\tau_2(z))}-sgn(\tau_2(z))\\
E&=\int_{t-T}^te^{-\gamma(\tau-t+T)}\Big(W^T\nabla{\vartheta}Gu_m(sgn(\tau_{2})-sgn(\tau_{1}))\\
&+u_m^2R(\varepsilon_{\tau_2}-\varepsilon_{\tau_1})+\varepsilon_{HJB}\Big)d\tau
\end{split}
\label{em}
\end{equation}
\subsection{Online variable gain gradient descent-based update law}
In \cite{modares2014optimal} and \cite{vamvoudakis2014online}, to tune the critic NN weights in order to minimize the approximate HJB error $\hat{e}$, a gradient descent-based online update scheme was used as, 
\begin{equation}
\begin{split}
\dot{\hat{W}}=-\alpha\bar{\vartheta}\hat{e}=-\alpha\bar{\vartheta}(\hat{I}+\hat{W}^T\Delta\vartheta)
\end{split}
\label{vamoda}
\end{equation}
where $\alpha>0$ is the constant learning rate, $\hat{I}$ denotes the reinforcement integral $(\hat{I}=\int_{t-T}^te^{-\gamma(\tau-t+T)}[Q(z)+\hat{U}(\hat{u})]d\tau)$ and $\bar{\vartheta}$, normalized regressor is given as, $\bar{\vartheta}=\frac{\Delta{\vartheta}}{(1+\Delta{\vartheta}^T\Delta{\vartheta})^2}$.

In order to update the critic NN weights in IRL framework, a novel online parameter update law that is driven by variable gain gradient descent is presented below. 
\begin{equation}
\small
\begin{split}
\dot{\hat{W}}&=-\alpha|\hat{e}|^{q_2}\bar{\vartheta}\hat{e}+\frac{\alpha}{2}|\Sigma|^{k_2}\Xi(z,\hat{u})\nabla{\vartheta}G(z)[I_m-\mathcal{B}(\tau_{2}(z))]G^T(z)L_{2z}\\
&+\alpha|\hat{e}|^{q_2}\Big((K_1\varphi^T-K_2)\hat{W}-\bar{\vartheta}\int_{t-T}^te^{-\gamma(\tau-t+T)}\hat{W}^TMd\tau \Big)
\end{split}
\label{tuning_law1}
\end{equation}
In (\ref{tuning_law1}), $\alpha>0,~k_2>0,~q_2>0$ and $K_1 \in \mathbb{R}^{N_1}~ K_2 \in \mathbb{R}^{N_1\times N_1}$ are constants.
The rate of variation of Lyapunov function along the augmented system trajectories is given by, $\Sigma=L^T_{2z}(F(z)+G(z)\hat{u})=z^T\dot{z}$, where $L_{2}=(1/2z^Tz)$.
Note, that since, augmented drift dynamics, $F(z)$ is unknown, the term $\Sigma$ is computed using numerical differentiation of the augmented state vector $z$, i.e., $\Sigma=z^T\dot{z}$.
Other terms used in the update law above are, $\varphi=\Delta\vartheta/m_s$, $m_s=1+\Delta\vartheta^T\Delta\vartheta$, $\mathcal{B}=diag\{\tanh^2{(\tau_{2i}(z))}\},~i=1,2...,m$, and $M$ is as in (\ref{em}).
The term $\Xi(z,\hat{u})$ is a piece-wise continuous indicator function defined as,
\vspace{-.1cm}
\begin{equation}
\Xi(z,\hat{u})=  \begin{cases} 
      0, & if~ \Sigma <0  \\
      1, & otherwise
   \end{cases}
\end{equation}
The update law presented in (\ref{tuning_law1}) is different from the IRL update laws mentioned in \cite{vrabie2009adaptive}, \cite{vamvoudakis2014online} and  \cite{modares2014optimal} in many ways. 
Both \cite{modares2014optimal} and \cite{vamvoudakis2014online} use dual-approximation structure known as actor-critic to compute optimal cost and optimal control and require an initial stabilizing controller to initiate the process of policy iteration, while the update law presented here leverages only one approximator (critic) and does not require any initial stabilizing controller. 
Rather it is an expansion over the update law presented in \cite{liu2015reinforcement} to IRL framework where the knowledge of drift dynamics is not required and a variable learning rate gradient descent is used.
The novelty of the tuning law (\ref{tuning_law1}) is explained below.
\begin{itemize}
\item The first term in (\ref{tuning_law1}) reduces the HJB error. 
It is different from the existing gradient-based update laws in the sense that it includes variable learning rate via the term $\alpha|\hat{e}|^{q_2}$. 
This term scales the learning speed based on the instantaneous value of the HJB error, i.e., a higher value of the HJB error in the initial phases of learning leads to a higher learning speed, while as the HJB error decreases, the learning speed is also reduced.
\item A stabilizing term comes next in (\ref{tuning_law1}). An indicator function similar to \cite{dierks2010optimal} and \cite{liu2015reinforcement} is considered so that this term is zero when the Lypunov function is strictly decreasing along the system trajectories and becomes one when the Lyapunov function is non-decreasing along the system trajectories. 
The main deviation from \cite{liu2015reinforcement} lies in the use of variable learning rate $\alpha|\Sigma|^{k_2}$ instead of just the constant $\alpha$. 
This augmentation could potentially pull the system out of unstable region towards stable region at a rate proportional to the degree of instability.
\item The last term in (\ref{tuning_law1}) is meant for enhancement of robustness. A similar concept of robust term was also presented in \cite{liu2015reinforcement}. However, in (\ref{tuning_law1}) the robust term appears with a variable learning rate and reflect the changes due to IRL framework.
\item Compared to \cite{liu2015reinforcement}, this update law does not require the information of drift dynamics and compared to \cite{modares2014optimal} and \cite{vamvoudakis2014online} it does not require actor NN and an initial stabilizing controller.
\end{itemize}

The error in critic weight is defined as, $\tilde{W}=W-\hat{W}$. The critic weight error dynamics is then given as,
\begin{equation}
\scriptsize
\begin{split}
\dot{\tilde{W}}&=\alpha(|e|^{q_2})\bar{\vartheta}[-\Delta\vartheta^T\tilde{W}+\int_{t-T}^te^{-\gamma(\tau-t+T)}\tilde{W}Md\tau+E]-\frac{\alpha}{2}(|\Sigma|^{k_2})\Xi(z,\hat{u})\nabla{\vartheta}G(z)\Big[I_m\\
&-\mathcal{B}(z)\Big]G^T(z)L_{2z} +\alpha(|e|^{q_2})\Big[\bar{\vartheta}\int_{t-T}^te^{-\gamma(\tau-t+T)}\hat{W}Md\tau+(K_2-K_1\varphi^T)\hat{W}\Big]
\end{split}
\label{crit_err}
\end{equation}
\vspace{-.5cm}
\subsection{Stability proof of the online update law}
\begin{assumption}\label{weight_as}
\textnormal{Ideal NN weight vector $W$ is considered to be bounded by a positive constant $W_M>0$ such that $\|W\| \leq  W_M$. There exists positive constants $b_{\epsilon}$ and $b_{\epsilon z}$ that bound the approximation error and its gradient such that $\|\varepsilon(z)\| \leq b_{\epsilon}$ and $\|\nabla{\varepsilon}\| \leq b_{\epsilon z}$.} 
\end{assumption}

\begin{assumption}\label{regres_as}
\textnormal{Critic regressors are considered to be bounded as well: $\|\vartheta(z)\| \leq b_{\vartheta}$ and $\|\nabla{\vartheta}(z)\| \leq b_{\vartheta z}$. }
\end{assumption}
In this paper, both the assumptions hold true because, there exists a stabilizing term in the update law (\ref{tuning_law1}) that comes into effect when Lyapunov function starts growing along the system trajectories. This term helps in pulling the system out of instability ensuring that the trajectories remain finite within a region $z \in \Omega \subset R^{2n}$. 
The Assumptions, \ref{weight_as} and \ref{regres_as} are in line with assumptions made in \cite{liu2015reinforcement}.

\begin{assumption}\label{lyapunov_as}
\textnormal{Let $L_2 \in C^1$ be a continuously differentiable and radially unbounded Lyapunov candidate for (\ref{aug}) and satisfies $\dot{L}_2=L^T_{2z}(F(z)+G(z)u^*)<0$, where $u*$ is as defined in (\ref{opt_u}). Furthermore, there exists a symmetric and positive definite $\Lambda \in \mathbb{R}^{2n\times 2n}$ such that $L^T_{2z}(F(z)+G(z)u^*)=-L^T_{2z}\Lambda L_{2z}$, where $L_{2z}$ is the partial derivative of $L_{2}$ wrt $z$.}
\end{assumption}
Following Lipschitz continuity of $(F(z)+Gu^*)$ in $z$, this assumption can be shown reasonable. It is also in line with Eq. (7) in \cite{dierks2010optimal} and Assumption 4 of \cite{liu2015reinforcement}.

\begin{theorem}
Let the CT nonlinear augmented system be described by (\ref{aug}) with associated HJB as (\ref{hjb_int}) and approximate optimal control as (\ref{approx_u}), then the tuning law (\ref{tuning_law1}) is stable in the sense of UUB.
\end{theorem}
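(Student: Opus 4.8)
The plan is to establish uniform ultimate boundedness of the augmented pair $(z,\tilde W)$ via a composite Lyapunov function of the form $\mathcal{L} = L_2(z) + \tfrac{1}{2\alpha}\tilde W^T \tilde W$ (or a scaled variant $\mathcal{L} = \kappa L_2(z) + \tfrac{1}{2\alpha}\tilde W^T\tilde W$ with $\kappa>0$ chosen later). First I would differentiate $\mathcal{L}$ along the closed-loop trajectories generated by the augmented dynamics (\ref{aug}) in feedback with the approximate controller (\ref{approx_u}) and the weight-error dynamics (\ref{crit_err}). The $L_2$ part contributes $\dot L_2 = L_{2z}^T(F(z)+G(z)\hat u)$; I would add and subtract $L_{2z}^T G(z) u^*$ so that, by Assumption \ref{lyapunov_as}, $\dot L_2 = -L_{2z}^T\Lambda L_{2z} + L_{2z}^T G(z)(\hat u - u^*)$, and then bound $\|\hat u - u^*\|$ in terms of $\|\tilde W\|$ plus the reconstruction-error terms, using the mean-value-theorem expansion of $\tanh$ from the value-function-approximation subsection and the bounds $\|G(z)\|\le g_M$, $\|\nabla\vartheta\|\le b_{\vartheta z}$, $\|\nabla\varepsilon\|\le b_{\epsilon z}$ of Assumptions \ref{weight_as}–\ref{regres_as}. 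The key structural point is that the second (stabilizing) term in the tuning law (\ref{tuning_law1}) is designed precisely so that its cross-contribution into $\dot{\mathcal{L}}$ through $\tfrac1\alpha\tilde W^T\dot{\tilde W}$ cancels — or dominates with the indicator $\Xi$ active — the indefinite term $L_{2z}^T G(z)(\hat u - u^*)$ whenever $\Sigma = L_{2z}^T(F(z)+G(z)\hat u) \ge 0$; when $\Sigma<0$ the $L_2$ dynamics is already contracting so that term can be discarded.

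Next I would handle the $\tfrac1\alpha\tilde W^T\dot{\tilde W}$ contribution coming from the first and third terms of (\ref{tuning_law1}). Substituting (\ref{crit_err}), the dominant negative term is $-\alpha|\hat e|^{q_2}\tilde W^T\bar\vartheta\Delta\vartheta^T\tilde W$; invoking a persistence-of-excitation condition on the normalized regressor $\bar\vartheta$ (i.e. $\bar\vartheta\Delta\vartheta^T \ge \beta I$ along trajectories, or at least $\|\varphi\|$ bounded below in integrated sense), this is bounded above by $-\alpha\beta|\hat e|^{q_2}\|\tilde W\|^2$. The robustifying term $\alpha|\hat e|^{q_2}(K_2 - K_1\varphi^T)\hat W$, rewritten via $\hat W = W - \tilde W$, yields a further negative definite piece $-\alpha|\hat e|^{q_2}\tilde W^T K_2\tilde W$ (for $K_2$ positive definite) plus cross terms in $W_M$ and $\|\varphi\|$ that are absorbed by completion of squares. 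The remaining integral terms involving $M$, and the residual $E$ from (\ref{em}), are bounded using $\|\mathcal{F}(z)\|\le$ const (since $|\tanh - \mathrm{sgn}|\le 2$ componentwise), $\|\nabla\vartheta G\|\le b_{\vartheta z}g_M$, and the reinforcement-interval factor $\int_{t-T}^t e^{-\gamma(\tau-t+T)}d\tau \le T$; these produce constant bounds $E_M$ and an $O(\|\tilde W\|)$ term, both handled by Young's inequality. The variable gain $|\hat e|^{q_2}$ is the subtle ingredient: since $\hat e$ itself depends affinely on $\tilde W$ (through $-\Delta\vartheta^T\tilde W + \cdots$), I would argue that outside a neighborhood of $\tilde W = 0$ the factor $|\hat e|^{q_2}$ is bounded below, so the negative terms retain their sign, while inside that neighborhood the residual set is already entered; this is the mechanism by which the paper claims a \emph{tighter} residual set.

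Collecting the bounds, I expect to arrive at an inequality of the shape $\dot{\mathcal{L}} \le -c_1\|L_{2z}\|^2 - c_2 |\hat e|^{q_2}\|\tilde W\|^2 + c_3\|\tilde W\| + c_4$, from which, by completing the square in $\|\tilde W\|$ and using radial unboundedness of $L_2$ (Assumption \ref{lyapunov_as}), one concludes $\dot{\mathcal{L}}<0$ whenever $\|z\|$ or $\|\tilde W\|$ exceeds an explicit threshold, giving UUB of the augmented state and the weight error. The main obstacle — and the step I would spend the most care on — is making the cancellation between the stabilizing term of (\ref{tuning_law1}) and the indefinite control-mismatch term $L_{2z}^T G(z)(\hat u - u^*)$ rigorous: one must show the term $\tfrac\alpha2|\Sigma|^{k_2}\Xi\,\nabla\vartheta G(z)[I_m - \mathcal{B}(\tau_2)]G^T(z)L_{2z}$, after being dotted with $\tilde W$ in $\dot{\mathcal{L}}$, actually produces something comparable to $-\Sigma$ times a positive quantity, which hinges on relating $\hat u - u^* \approx -\tfrac12[I_m-\mathcal{B}]G^T\nabla\vartheta^T\tilde W$ (the same $\tanh$-linearization) so that the factor $[I_m-\mathcal{B}]G^TL_{2z}$ reappears with the correct sign. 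A secondary difficulty is the circular dependence of Assumptions \ref{weight_as}–\ref{regres_as} (boundedness of regressors on $\Omega$) on the very boundedness conclusion being proved; I would resolve this in the standard ADP fashion by a bootstrap/invariant-set argument, assuming trajectories remain in $\Omega$ until the UUB estimate certifies they cannot leave.
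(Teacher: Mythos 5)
Your proposal follows essentially the same architecture as the paper's proof: the same composite Lyapunov function $L = L_2 + \frac{1}{2\alpha}\tilde{W}^T\tilde{W}$, the same case split driven by the indicator $\Xi$ (discarding the already-negative $L_{2z}^T\dot{z}$ when $\Sigma<0$, and adding/subtracting $L_{2z}^TG(z)u^*$ together with Assumption \ref{lyapunov_as} when $\Sigma\geq 0$), the same $\tanh$-mean-value bounds, and the same completion-of-squares leading to residual sets for $\|L_{2z}\|$ and $\|\tilde{W}\|$. Your reading of the stabilizing term's role and of how the variable gains $|\hat{e}|^{q_2}$, $|\Sigma|^{k_2}$ tighten the residual sets also matches the paper's post-theorem discussion.

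The one step that would fail as written is your primary mechanism for the negative-definite quadratic in $\tilde{W}$. You invoke a pointwise persistence-of-excitation bound $\bar{\vartheta}\Delta\vartheta^T \geq \beta I$, but $\bar{\vartheta}\Delta\vartheta^T = \varphi\varphi^T$ is rank one, so it can never dominate $\beta I$ for $N_1\geq 2$; the term $-g_1\tilde{W}^T\varphi\varphi^T\tilde{W}$ only penalizes the component of $\tilde{W}$ along $\varphi$. The integrated form of PE that you mention as a fallback does not rescue a pointwise Lyapunov decrement either. The paper avoids this entirely by stacking $\mathcal{S}=[\tilde{W}^T\varphi,\ \tilde{W}^T]^T$ and collecting the gradient term together with the robustifying contributions $(K_1\varphi^T-K_2)\hat{W}$ into a block matrix $M$, so that the required negativity $-\lambda_{\min}(M_1)\|\mathcal{S}\|^2$ comes from choosing $K_1,K_2$ to make $M_1=(M+M^T)/2$ positive definite --- no excitation condition is needed. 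You do correctly identify the $-\alpha|\hat{e}|^{q_2}\tilde{W}^TK_2\tilde{W}$ piece as negative definite later in the same paragraph, so the correct ingredient is present in your argument; you just need to make it, rather than PE, carry the quadratic negativity in the full $\tilde{W}$ space (or adopt the paper's $\mathcal{S}$-and-$M$ packaging). With that repair the rest of your outline goes through as in the paper.
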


\begin{proof}
Let the Lyapunov candidate be 
\begin{equation}
    L=L_2+(1/2\alpha)\tilde{W}^T\tilde{W}
\end{equation}
where, $L_2$ is as described after (\ref{tuning_law1}). 
In subsequent analysis, the terms $|\hat{e}(t)|^{k_2}$ ($\hat{e}(t)$ as obtained from (\ref{hjb_error})) and $|\Sigma(z(t))|^{q_2}$ would be referred to as $g_1(t)$ and $g_2(t)$, respectively. 
Then,
\begin{equation}
\small
\begin{split}
\dot{L}=L^T_{2z}(F(z)-u_m G(z)\tanh{(\tau_2(z))})+\tilde{W}^T\dot{\tilde{W}}/\alpha
\end{split}
\end{equation}
Utilizing error dynamics of weights (\ref{crit_err}) and the fact that $\dot{z}=F(z)+G(z)\hat{u}$, the last term of $\dot{L}$ becomes:
\begin{equation}
\small
\begin{split}
&\tilde{W}^T\alpha^{-1}\dot{\tilde{W}}=g_1\tilde{W}^T\bar{\vartheta}\Big[-\Delta\vartheta^T\tilde{W}+\int_{t-T}^te^{-\gamma(\tau-t+T)}\tilde{W}Md\tau+E\Big]-\\
& \frac{1}{2}g\Xi(z,\hat{u})L^T_{2z}G(z)\Big[I_m-\mathcal{B}(\tau_2(z))\Big]G^T(z)\nabla{\vartheta}^T\tilde{W} \\
&+g_1\tilde{W}^T\bar{\vartheta}\int_{t-T}^te^{-\gamma(\tau-t+T)}\hat{W}Md\tau+g_1\tilde{W}^T(K_2\hat{W}-K_1\varphi^T\hat{W}) \\
&=-g_1\tilde{W}\varphi\varphi^T\tilde{W}+g_1\tilde{W}^T(\varphi^T/m_s)E+g_1\tilde{W}^T\beta(z)-\\
& \frac{1}{2}g_2\Xi(z,\hat{u})L^T_{2z}G[I_m-\mathcal{B}(\tau_2(z))]G^T(z)\nabla{\vartheta}^T\tilde{W}+g_1\tilde{W}^T(K_2\hat{W}\\
&-K_1\varphi^T\hat{W})
\end{split}
\label{ldot4}
\end{equation}
where, $\bar{\vartheta}=\varphi/m_s$, $\varphi=\Delta\vartheta/m_s$, $\beta(z)=(\varphi^T/m_s)I_0$ where $I_0=\int_{t-T}^te^{-\gamma(\tau-t+T)}W^TMd\tau$.
Now, define 
\begin{equation}
\begin{split}
  \mathcal{S}\triangleq[\tilde{W}^T\varphi,\tilde{W}^T]^T  
\end{split}
\label{j_def}
\end{equation}
Then, (\ref{ldot4}) can be re-written as:
\begin{equation}
\begin{split}
\dot{\tilde{W}}^T\tilde{W}/\alpha&=-\mathcal{S}^TM\mathcal{S}+\mathcal{S}^T\mathcal{K}-\frac{1}{2}g_2\Xi(z,\hat{u})L^T_{2z}G(z)[I_m\\
&-\mathcal{B}(\tau_2(z))]G^T(z)\nabla{\vartheta}^T\tilde{W}
\end{split}
\end{equation}
where, $g_1 \triangleq g_1(t), g_2\triangleq g_2(t)$, $\mathcal{K}=g_1N\in \mathbb{R}^{N_1+1}$ and $M\in \mathbb{R}^{(N_1+1)\times (N_1+1)}$ are defined as,
\begin{equation}
\small
\begin{split}
M=\begin{pmatrix}
g_1    &      -\frac{g_1}{2}K^T_1 \\
-\frac{g_1}{2}K_1  &  g_1K_2
\end{pmatrix},N=\begin{pmatrix}
E/m_s  \\
\beta(z)+K_2W_c-K_1\varphi^TW
\end{pmatrix}
\end{split}
\end{equation}
Therefore, the Lyapunov derivative can be rendered as,
\begin{equation}
\small
\begin{split}
\dot{L} &\leq L^T_{2z}(F(z)+G(z)\hat{u})-\lambda_{min}(M_{1})\|\mathcal{S}\|^2+b_N\|\mathcal{S}\|\\
& -\frac{1}{2}g_2\Xi(z,\hat{u})L_{2z}^TG(z)[I_m-\mathcal{B}(\tau_2(z))]G^T(z)\nabla{\vartheta}^T\tilde{W}
\end{split}
\label{ldot}
\end{equation}
where, $M_{1}=(M+M^T)/2$. 
The upper bound of $\mathcal{K}$, is given as, $\|\mathcal{K}\| \leq b_N=g_1max(\|N\|)$.
Based on the value of the piece-wise continuous function, $\Xi(z,\hat{u})$, Eq. (\ref{ldot}) can be explained in two cases as detailed below.

\textbf{Case (i)}:
${\Xi(z,\hat{u})=0}$.\\
By definition, in this case, $L_{2z}^T\dot{z}<0$. Then, considering the density property of real numbers it can be said that there exists  $\beta$ such that $0<\beta \le \|\dot{z}\|$. Therefore, $L_{2z}^T\dot{z} \leq -\|L_{2z}\|\beta <0$.
Now, using this inequality in (\ref{ldot}), $\dot{L}$ is obtained as,
\begin{subequations}
\begin{equation}
\small
 \dot{L}  \leq L^T_{2z}\dot{z}-\lambda_{min}(M_1)\|\mathcal{S}\|^2+b_N\|\mathcal{S}\|   
\label{ldot2}
\end{equation}
\begin{equation}
\small
\leq -\|L_{2z}\|\beta+\frac{b_N^2}{4\lambda_{min}(M_1)}-\lambda_{min}(M_1)\Big(\|\mathcal{S}\| -\frac{b_N}{2\lambda_{min}(M_1)}\Big)^2 
\label{ldot3}
\end{equation}
\end{subequations}
Form (\ref{ldot2}), a sufficient condition to ensure negative definiteness of $\dot{L}$ for the above system can be obtained as $\|\mathcal{S}\| > b_N/\lambda_{min}(M_1)$. Also, recall from the definition of $\mathcal{S}$ in (\ref{j_def}), the upper bound of $\|\mathcal{S}\|$ can be obtained as, $ \|\mathcal{S}\| \leq \Big(\sqrt{1+\|\varphi\|^2}\Big)\|\tilde{W}\|$.
Therefore, using the upper and lower bound over $\|\mathcal{S}\|$ as obtained above and the expression for $b_N$, the UUB set for $\tilde{W}$ can be obtained as, 
\begin{equation}
\small
\|\tilde{W}\| > \frac{g_1max(\|N\|)}{\lambda_{min}(M_1)\sqrt{1+\|\varphi\|^2}}
\label{ld1}
\end{equation}
Now from, (\ref{ldot3}), for ensuring stability,
\begin{equation}
\small
\begin{split}
-\|L_{2z}\|\beta+\frac{b_N^2}{4\lambda_{min}(M)} < 0
\end{split}
\label{l2_case1}
\end{equation}
Utilizing $b_N$ in (\ref{l2_case1}), the UUB set for $L_{2z}$ can be obtained as,
\begin{equation}
\|L_{2z}\| > \frac{g_1^2max(\|N\|^2)}{4\beta\lambda_{min}(M_1)} 
\label{declyap}
\end{equation}
This proves that $\tilde{W}$ and  $L_{2z}$ are UUB stable with corresponding sets described by (\ref{ld1}) and (\ref{declyap}), respectively. 

\textbf{Case (ii)}: ${\Xi(z,\hat{u})=1}$ \\
This case implies that the control policy generated during policy iteration is destabilizing.
From (\ref{approx_u}) and (\ref{ldot}),
\begin{equation}
\small
\begin{split}
\dot{L} & \leq L^T_{2z}F(z)-u_m L^T_{2z}G(z)\Big[\tanh{(\tau_2(z))}+\frac{g_2}{2u_m}[I_m\\
&-\mathcal{B}(\tau_2(z))]G^T\nabla{\vartheta}^T\tilde{W}\Big]-\lambda_{min}(M_1)\|\mathcal{S}\|^2+b_N\|\mathcal{S}\| \\
\end{split}
\end{equation}
Now, adding and subtracting $L^T_{2z}(G(z)u^*)$,
\begin{equation}
\small
\begin{split}
\dot{L} & \leq L^T_{2z}(F(z)+Gu^*)-u_m L^T_{2z}G(z)\Big[\tanh{(\tau_2(z))}\\
& +\frac{g_2}{2u_m}[I_m-\mathcal{B}(\tau_2(z))]G^T\nabla{\vartheta}^T\tilde{W}\Big]-\lambda_{min}(M_1)\|\mathcal{S}\|^2\\
& +b_N\|\mathcal{S}\|-L^T_{2z}G(z)(-u_m\tanh{(\tau_1(z))}+\epsilon_{u^*})
\end{split}
\label{ldot_2case}
\end{equation}
Now, using the inequality $\|\tanh{(\tau_1(z))}-\tanh{(\tau_2(z))}\| \leq 2\sqrt{m}$ (where, $m$ is the dimension of control vector), Assumptions \ref{regres_as} and \ref{lyapunov_as},  Inequality (\ref{ldot_2case}) can be re-written as:
\begin{equation}
\footnotesize
\begin{split}
\dot{L} 
& \leq -\lambda_{min}(\Lambda)\|L_{2z}\|^2+\|L_{2z}\|(2\sqrt{m}u_m g_M+\frac{g_2}{2}\|\mathcal{N}_1\nabla^T{\vartheta}\tilde{W}\|)+\frac{b_N^2}{4\lambda_{min}(M_1)}\\
&-\lambda_{min}(M_1)\Big(\|\mathcal{S}\|-\frac{b_N}{\lambda_{min}(M_1)}\Big)^2+k\|L^T_{2z}\|g_M^2b_{\vartheta z}\\
\end{split}
\label{first_ineq}
\end{equation}
where, $\mathcal{N}_1\triangleq G(z)[\mathcal{B}(\tau_2(z))-I_m]G^T(z)$, $k \triangleq\frac{1}{2}(1-\tanh^2{\xi})$. Now, following similar method as in \cite{liu2015reinforcement}, two positive constant numbers $l_1$ and $l_2$ are defined such that $l_1+l_2=1$. 
From (\ref{j_def}), $\|\tilde{W}\|^2 \leq \|\mathcal{S}\|^2$, the inequality in (\ref{first_ineq}) can be developed as follows:
\begin{equation}
\scriptsize
\begin{split}
\dot{L}
\leq -l_1\lambda_{min}(\Lambda)(\|L^T_{2z}\|-\mathcal{Q}_2)^2+\mathcal{Q}_1-(\lambda_{min}(M_1)-\mathcal{Q})\Big(\|\mathcal{S}\|-\frac{b_N}{2(\lambda_{min}(M_1)-\mathcal{Q})}\Big)^2\\
\end{split} 
\label{ldot_semi_final}
\end{equation}
where, 
\begin{equation}
\small
\begin{split}
\mathcal{Q}&\triangleq\frac{\|g_2/2\mathcal{N}_1\nabla^T\vartheta\|^2}{4l_2\lambda_{min}(\Lambda)};~~\mathcal{Q}_2 \triangleq \frac{2\sqrt{m}g_Mu_m+kg^2_Mb_{\vartheta z}}{2l_1\lambda_{min}(\Lambda)} \\
\mathcal{Q}_1& \triangleq\frac{(2\sqrt{m}g_Mu_m+kg^2_Mb_{\vartheta z})^2}{4l_1\lambda_{min}(\Lambda)}+\frac{b_N^2}{4(\lambda_{min}(M_1)-\mathcal{Q})}
\end{split}
\label{R}
\end{equation}
Note that $\mathcal{Q}$ must be selected such that it is smaller than $\lambda_{min}(M_1)$. 
 Finally, inequality of (\ref{ldot_semi_final}) can be split into two inequalities, one for $L^T_{2z}$ and the other one for $\mathcal{S}$ (or equivalently $\tilde{W}$) as follows.
 \begin{subequations}
 \begin{equation}
\begin{split}
-l_1\lambda_{min}(\Lambda)(\|L^T_{2z}\|-\mathcal{Q}_2)^2+\mathcal{Q}_1<0
\end{split}
\label{L2}
\end{equation}
\vspace{-.15cm}
\begin{equation}
\small
\begin{split}
\mathcal{Q}_1-(\lambda_{min}(M)-\mathcal{Q})\Big(\|\mathcal{S}\|-\frac{b_N}{2(\lambda_{min}(M_1)-\mathcal{Q})}\Big)^2 <0 
\end{split}
\label{Y}
\end{equation}
 \end{subequations}
Therefore, (\ref{L2}) leads to UUB set for $L_{2z}$,
\vspace{-.1cm}
\begin{equation}
\small
\begin{split}
\|L^T_{2z}\| > \mathcal{Q}_2+\sqrt{\frac{\mathcal{Q}_1}{l_1\lambda_{min}(\Lambda)}}
\end{split}
\label{L2c2}
\end{equation}

Similarly, for $\mathcal{S}$, (\ref{Y}) can be re-written as,
\begin{equation}
\small
\begin{split}
\|\mathcal{S}\| > \frac{b_N}{2(\lambda_{min}(M_1)-\mathcal{Q})}+\sqrt{\frac{\mathcal{Q}_1}{(\lambda_{min}(M_1)-\mathcal{Q})}}=A 
\label{Yineq1}
\end{split}
\end{equation}
Utilizing the upper bound of $\|\mathcal{S}\|$ defined after (\ref{ldot3}) and lower bound of $\|\mathcal{S}\|$ from (\ref{Yineq1}), UUB set for $\tilde{W}$ is,
\vspace{-.1cm}
\begin{equation}
\begin{split}
\|\tilde{W}\|>\frac{A}{\sqrt{1+\|\varphi\|^2}}
\end{split}
\label{wcineq}
\end{equation}
This completes the stability proof of the update mechanism (\ref{tuning_law1}).
\end{proof}

Note that for \textbf{Case (i)}, if $\tilde{W}$ and $L_{2z}$ stays outside (\ref{ld1}) and (\ref{declyap}), respectively,  and for \textbf{Case (ii)}, if they stay outside (\ref{wcineq}) and (\ref{L2c2}), respectively, then it leads to decreasing $\tilde{W}$ and $L_{2z}$.
A diminishing $\tilde{W}$ implies that $\hat{W}$ is getting close to the ideal weight $W$ which in turn implies a decreasing HJB error or a decreasing $g_1$.
Similarly, based on the expression of $L_{2}$ (defined after Eq. (\ref{tuning_law1})), and invoking Lipschitz continuity on dynamics, $|\Sigma|$ is bounded by $(L_F\|z\|+g_Mu_m)\|L_{2z}\|$ where, $\|F(z)\| \leq L_F\|z\|$ and $\|G(z)\hat{u}\| \leq g_Mu_m$, where $L_F$ and $g_M$ are Lipschitz constants on $F$ and $G$, respectively.
Now, since, $\|L_{2z}\|=\|z\|$ decreases in the stable region, it leads to $g_2$ being smaller than $(L_F\|z\|+g_Mu_m)\|z\|$.
As an immediate consequence of these, it can be observed that, the RHS of inequalities, (\ref{ld1}), (\ref{declyap}), (\ref{L2c2}) and (\ref{wcineq}) shrink in magnitude due to presence of terms $g_1$ and $g_2$.
This ensures that the variable gain gradient descent leads to tighter residual sets for parameters and augmented system trajectories than constant learning rate-based gradient descent scheme existing in literature.
\section{Result and Simulation}\label{res}
\vspace{-.1cm}
The control algorithm based on IRL and single approximation structure developed in this paper is validated in this section on the 6-DoF nonlinear model of Aerosonde UAV (refer to Pages 61,62 and 276 of \cite{beard2012small}). 
\begin{figure}
\centering
\subcaptionbox{Desired and actual attitude profile\label{fig:att}}{\includegraphics[width=.23\textwidth,height=9.5cm,keepaspectratio,trim={1.8cm 0.0cm 4cm .08cm},clip]{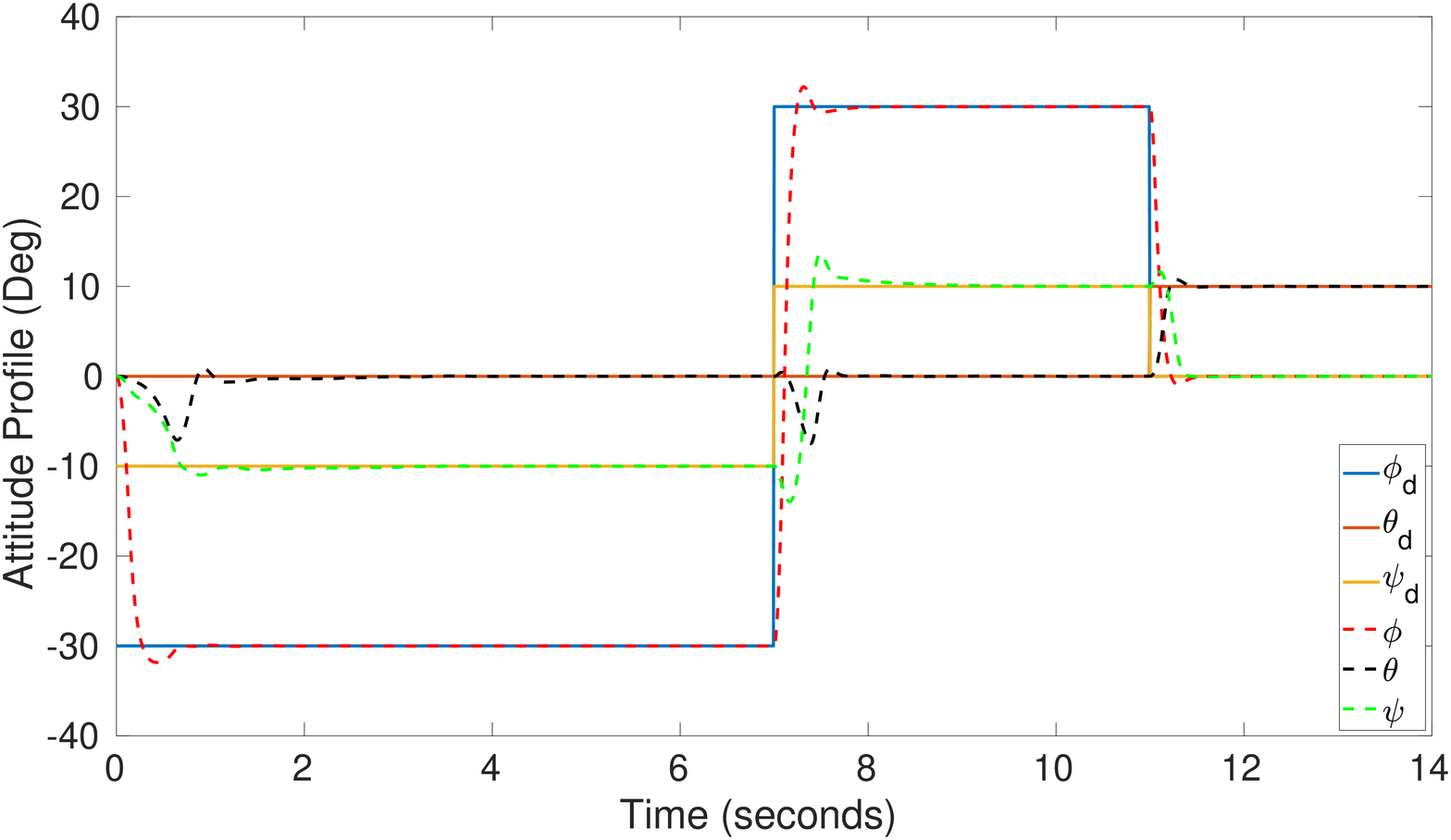}}%
\hspace{0cm} 
\subcaptionbox{Difference between desired and actual rates\label{fig:rates}}{\includegraphics[width=.23\textwidth,height=9.5cm,keepaspectratio,trim={1.8cm 0.0cm 2cm .08cm},clip]{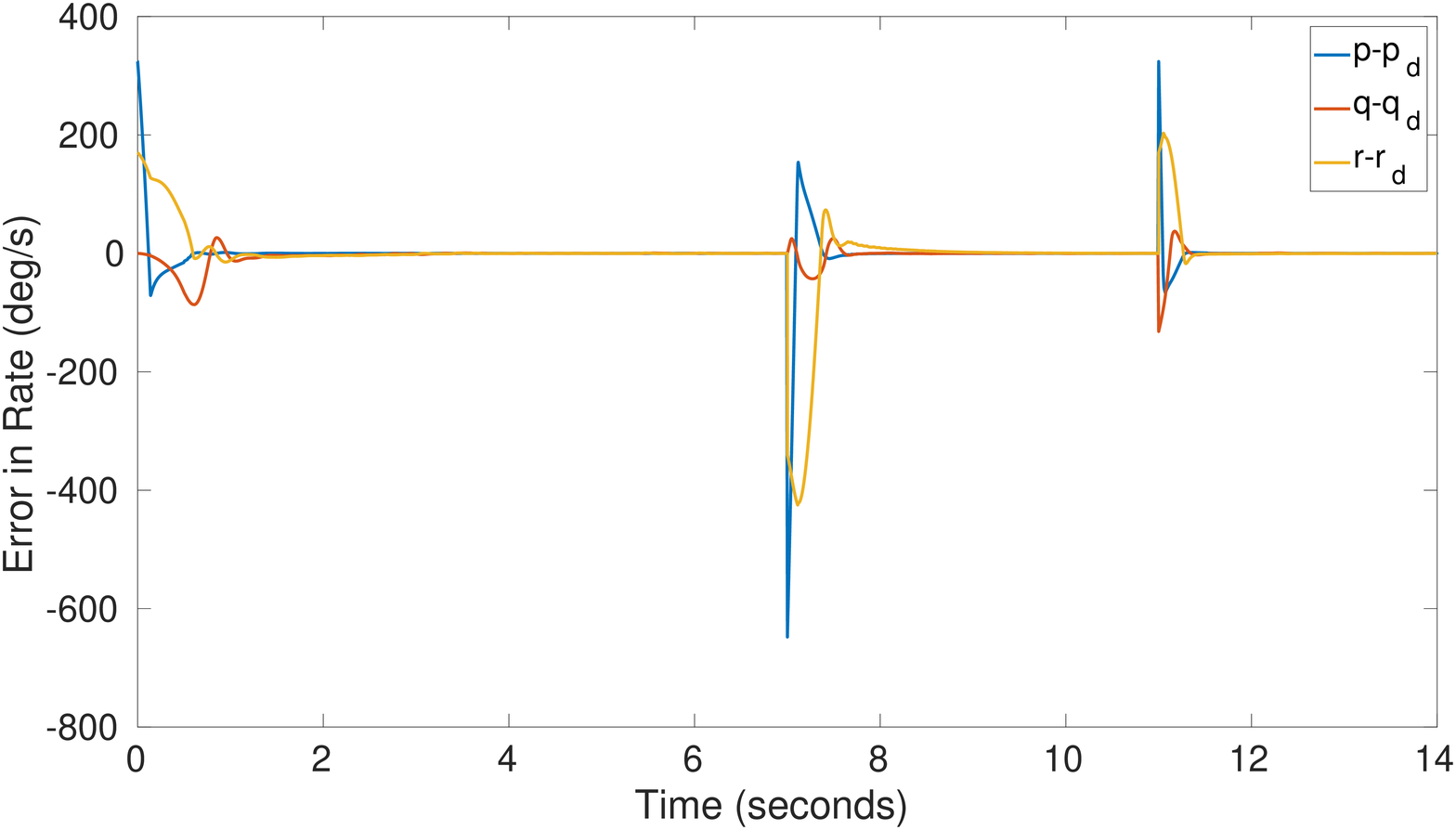}}\\

\hspace{0cm} 
\subcaptionbox{States profile of the UAV\label{fig:states}}{\includegraphics[width=.23\textwidth,height=9.5cm,keepaspectratio,trim={1.8cm 0.0cm 4cm .08cm},clip]{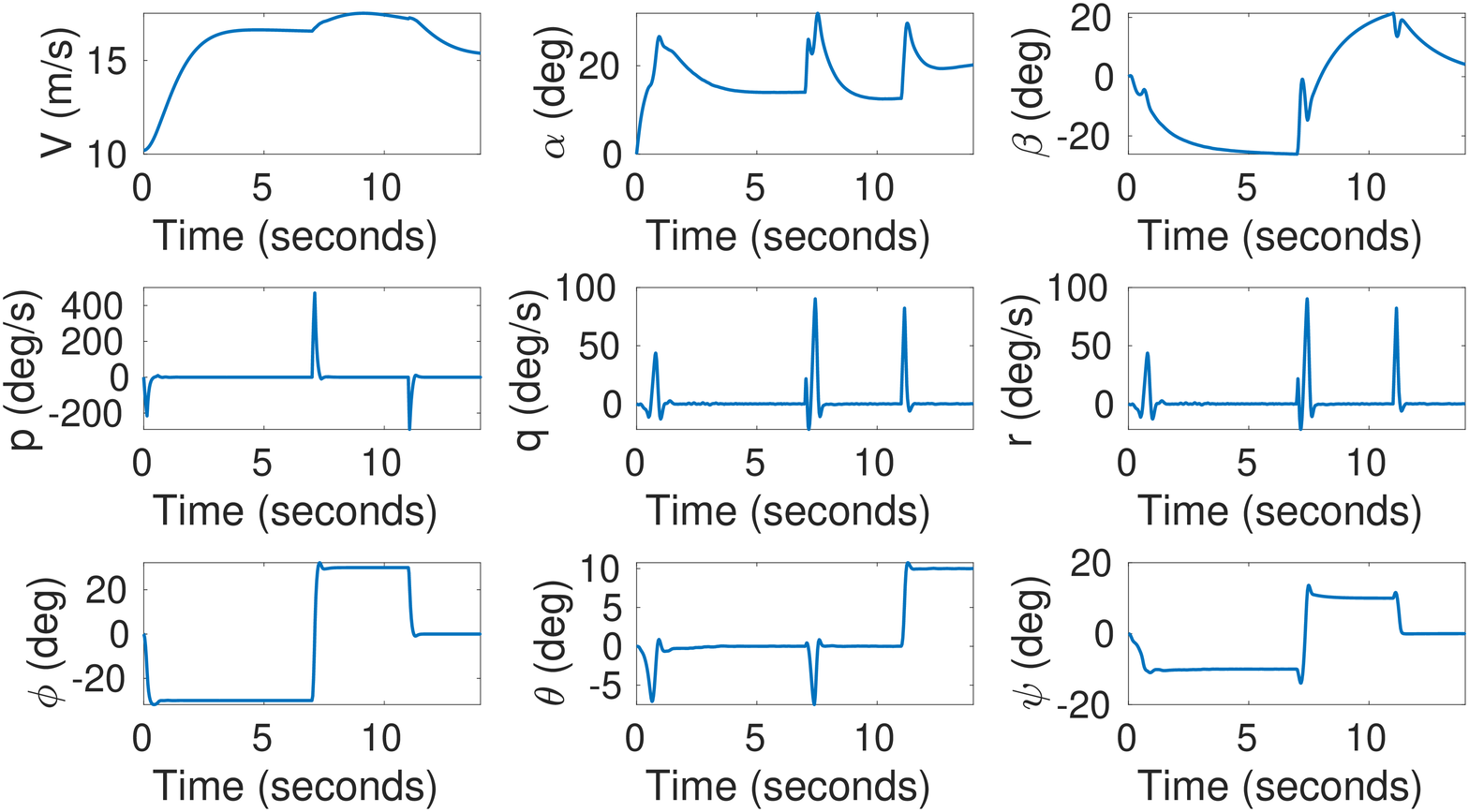}}%
\hspace{0cm} 
\subcaptionbox{Control profile representing elevator, eileron and rudder\label{fig:ctr}}{\includegraphics[width=.23\textwidth,height=9.5cm,keepaspectratio,trim={1.8cm 0.0cm 2cm .08cm},clip]{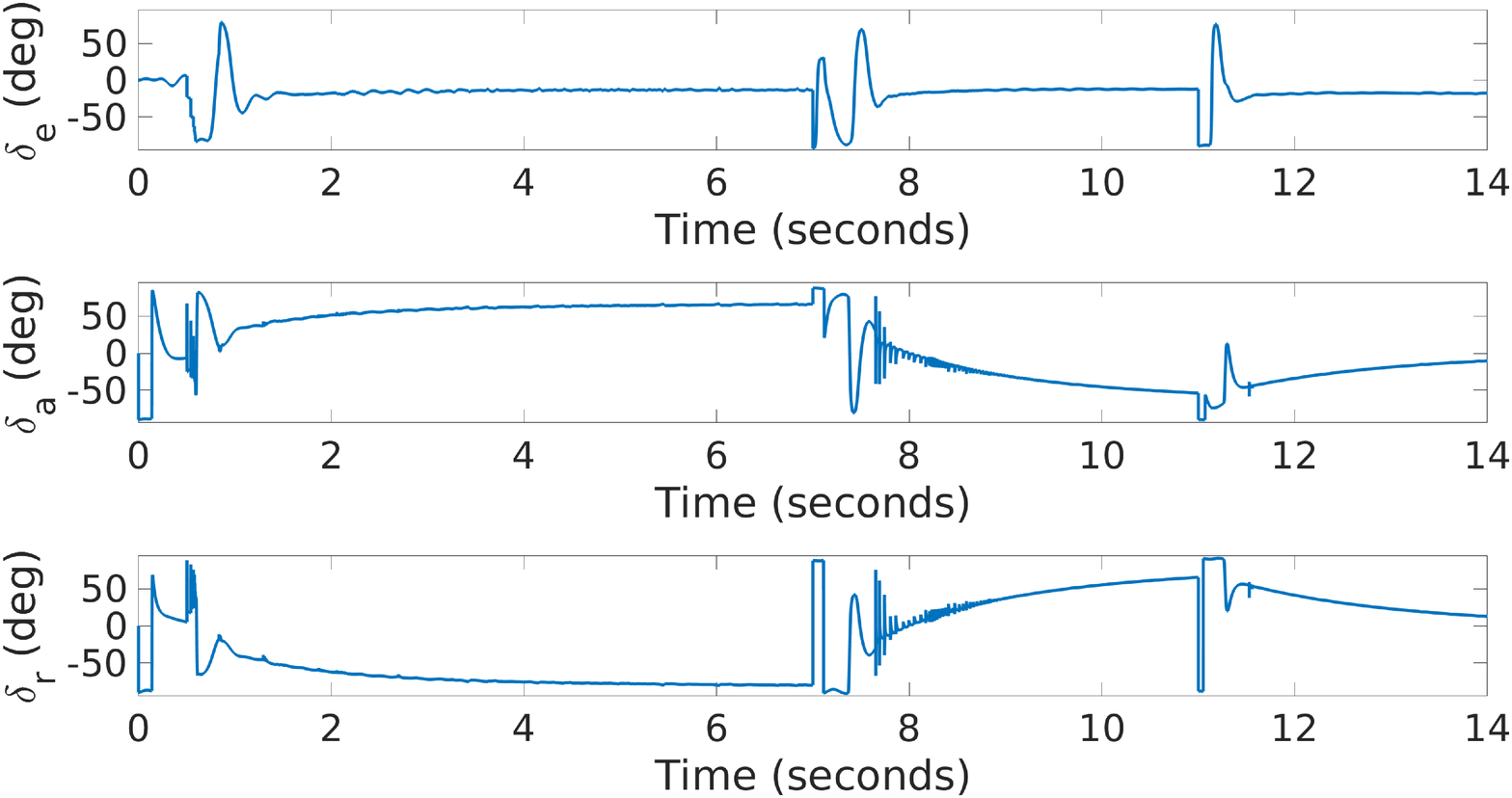}}%
\\
\hspace{0cm}
\subcaptionbox{Online evolution of critic NN weights\label{fig:crit}}{\includegraphics[width=.23\textwidth,height=9.5cm,keepaspectratio,trim={1.8cm 0.0cm 2cm .08cm},clip]{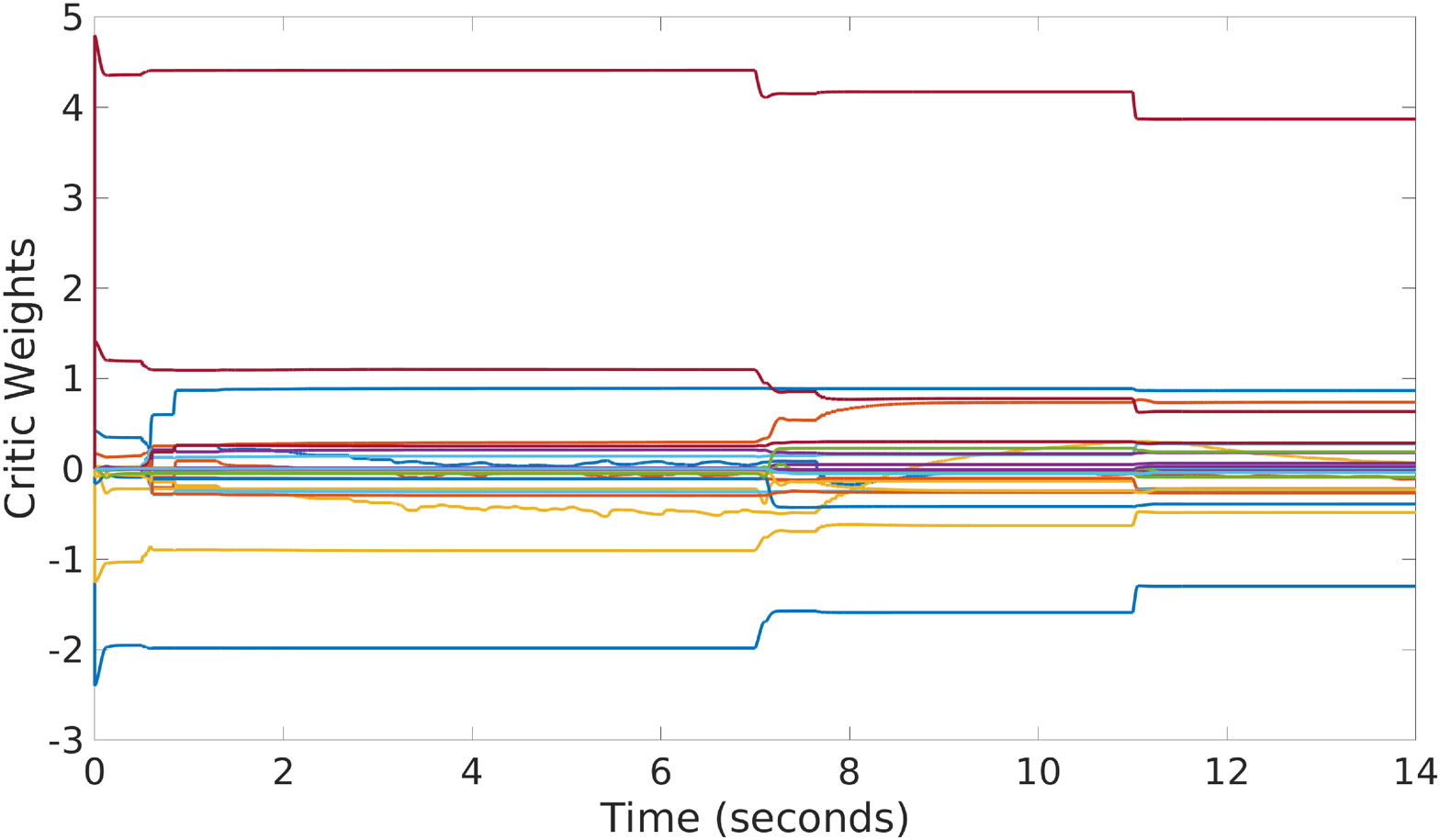}}
\hspace{0cm} 
\subcaptionbox{Value function approximation\label{fig:val}}{\includegraphics[width=.23\textwidth,height=9.5cm,keepaspectratio,trim={1.8cm 0.0cm 2cm .08cm},clip]{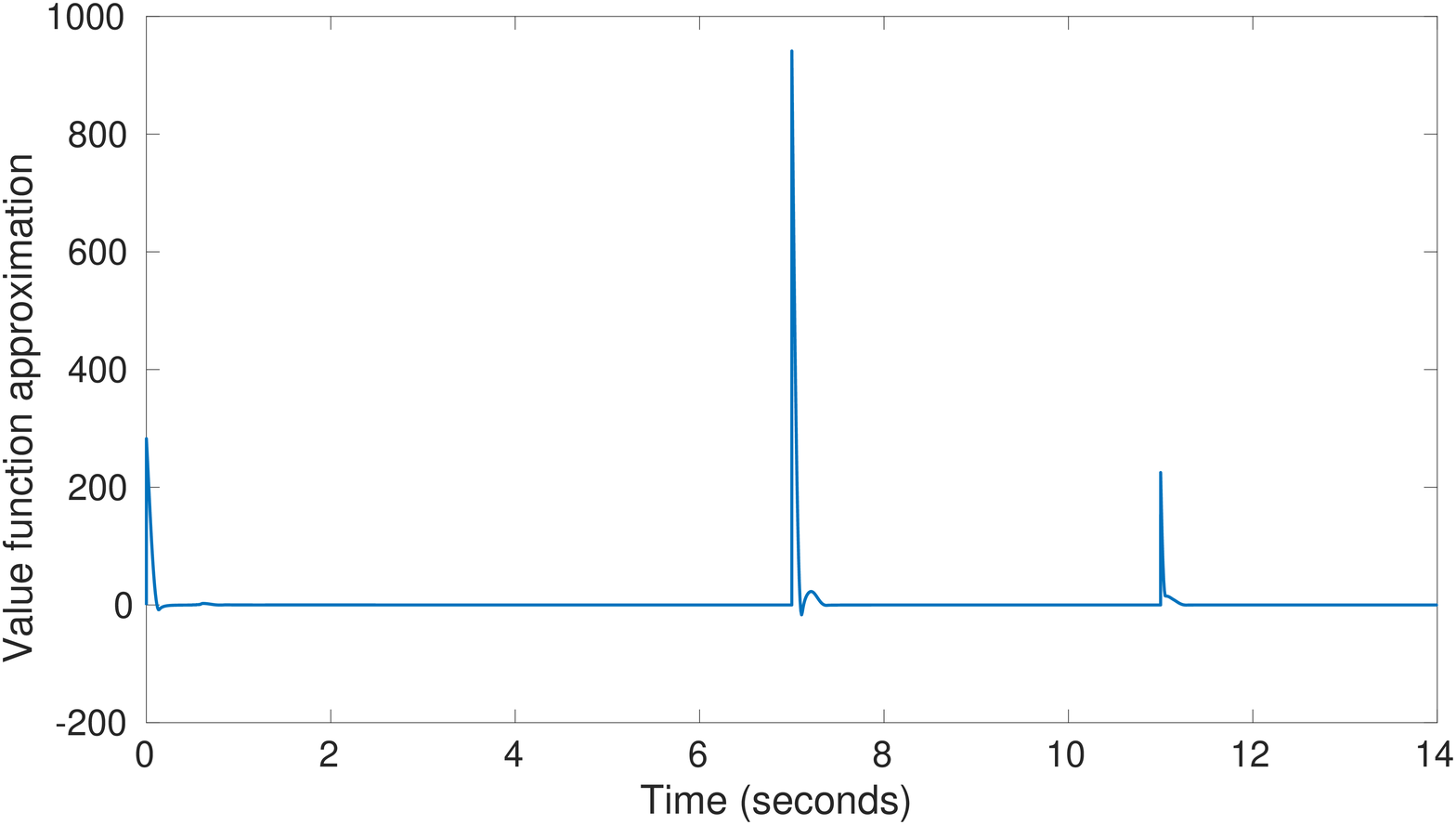}}%
\caption{Performance of Aerosonde UAV under presented update law}
\label{fig:e_states}
\end{figure}
The control implementation is made up of two cascaded loops, the the first loop, i.e., outer loop converts the desired Euler angle information to desired rates, the inner loop uses the presented control algorithm to track the desired rates in an optimal way. 
Desired Euler angle rates are given by, $p_{des}=\dot{\phi}_{des}-8 e_{\phi},~q_{des}=\dot{\theta}_{des}-10 e_{\theta},~r_{des}=\dot{\psi}_{des}-12 e_{\psi}$, where $\phi,\theta,\psi$ are roll, pitch and yaw angles, respectively. 
Elevator ($\delta_e$), Aileron ($\delta_a$) and Rudder ($\delta_r$) deflection are the control inputs to the system.
The augmented state is, $z=[e_{p},e_{q},e_{r},p_{des},q_{des},r_{des}]^T \in \mathbb{R}^6$ where $e=x-x_{des}$. 
The regressor vector for critic NN is chosen to be, $\vartheta=[z_1,z_2,z_3,z_4,z_5,z_6,z_1^2,z_2^2,z_3^2,z_4^2,z_5^2,z_6^2,z_1z_2,\\z_1z_3,z_1z_4,z_1z_5,z_1z_6,z_2z_3,z_2z_4,z_2z_5,z_2z_6,z_3z_4,z_3z_5,z_3z_6,z_4z_5,\\z_4z_6,z_5z_6]^T$.
The discount factor $\gamma=0.1$, reinforcement interval $T=0.001s$.
The weight matrix for augmented states and control are  $Q_1=diag(10,10,50,0,0,0)$ and $R=I_3$, respectively. 
The baseline learning rate $\alpha=16.1$, parameters for variable gain gradient descent are $q_2=.1,k_2=.01$ and actuator saturation is $u_m=90$ deg.  
A dithering noise of the form, $n(t)=2e^{-0.009t}(\sin(11.9t)^2\cos(19.5t)+\sin(2.2t)^2\cos(5.8t)+\sin(1.2t)^2\cos(9.5t)+\sin(2.4t)^5)$ is added to maintain the persistent excitation (PE) condition as demonstrated in \cite{vamvoudakis2014online}. All the critic weights were initialized to $0$, i.e., $\hat{W}(0)=0$.
Desired set point for $\phi,\theta,\psi$ was set to $(-30,0,-10)$ degrees for first $7$ seconds, then $(30,0,10)$ degrees from 7-11 seconds and finally $(0,0,10)$ degrees.
Fig. \ref{fig:att} shows
that the proposed control scheme can accurately track the desired attitude. 
All the states of the UAV are bounded as can be inferred from Fig. (\ref{fig:states}). 
From Fig. \ref{fig:ctr}), the generated optimal control action is within $\pm90$ degrees. 
The weights of the critic NN converge close to their ideal values in finite amount of time as can be seen from Fig. \ref{fig:crit}.
Since the resultant cost is very small (close to $0$), as can be seen in Fig. \ref{fig:val}, the control policies corresponding to elevator, aileron and rudder (Fig. \ref{fig:ctr}) are approximately optimal.
Since the proposed update law is based on IRL framework, it does not require drift dynamics and hence a lot of aerodynamic stability and damping derivatives are not needed for control implementation.
The only aerodynamic coefficients that are required for this control strategy are the control derivatives that appear in control coupling dynamics 
(refer to page no. 61-62 of \cite{beard2012small}).
Further, critic-only approximation structure helps in reducing computational load as only one set of NN weights need to be updated at any given time instead of two NNs (actor-critic).
This is especially advantageous in applications, where lighter computational load is helpful in meeting stringent real-time requirements.

\section{Conclusion}\label{conclusion}
\vspace{-.1cm}
A novel parameter update law for single-approximator structure (critic-only neural network (NN)) in integral reinforcement learning framework to solve optimal trajectory tracking problem of partially-unknown continuous time nonlinear system with actuator constraints has been proposed in this paper. 
The presented update law has been shown to ensure the uniform ultimate boundedness (UUB) stability of the augmented system. The stabilizing term in the update law has helped in obviating the requirement of initial stabilizing controller. 
Moreover, the use of variable gain gradient descent in the presented update law could adjust the learning rate depending on the Hamilton-Jacobi-Bellman (HJB) error and instantaneous rate of variation of Lyapunov function along the system trajectories, which has ensured tighter residual set.
The presented algorithm has been successfully validated on full 6-DoF UAV model. 

\bibliographystyle{IEEEtran}
\bibliography{sample}             
                                                   







\end{document}